\newtheorem{definition}{Definition}
\newtheorem{theorem}{Theorem}
\newtheorem{corollary}{Corollary}
\newtheorem{lemma}{Lemma}
\title{A One-Pass Private Sketch for Most Machine Learning Tasks}
\author{Benjamin Coleman\\
  Department of Electrical and Computer Engineering\\
  Rice University\\
  Houston, TX 77005 \\
  \texttt{ben.coleman@rice.edu}\\
   \And
  Anshumali Shrivastava\\
  Department of Computer Science\\
  Rice University\\
  Houston, TX 77005 \\
  \texttt{anshumali@rice.edu}
}
\begin{document}

\maketitle

\begin{abstract}
Differential privacy (DP) is a compelling privacy definition that explains the privacy-utility tradeoff via formal, provable guarantees. Inspired by recent progress toward general-purpose data release algorithms, we propose a private sketch, or small summary of the dataset, that supports a multitude of machine learning tasks including regression, classification, density estimation, near-neighbor search, and more. Our sketch consists of randomized contingency tables that are indexed with locality-sensitive hashing and constructed with an efficient one-pass algorithm.  We prove competitive error bounds for DP kernel density estimation. Existing methods for DP kernel density estimation scale poorly, often exponentially slower with an increase in dimensions.
In contrast, our sketch can quickly run on large, high-dimensional datasets in a single pass.  Exhaustive experiments show that our generic sketch delivers a similar privacy-utility tradeoff when compared to existing DP methods at a fraction of the computation cost.  We expect that our sketch will enable differential privacy in distributed, large-scale machine learning settings. \end{abstract}

\section{Introduction}
\label{sec:Intro}
Large-scale data collection is an integral component of the modern machine learning pipeline. The success of a learning algorithm critically depends on the quantity and quality of the input data. Although vast amounts of data are available, the information is often of a personal or sensitive nature. Models can leak substantial information about individual participants, even if we only release predictions or outputs ~\cite{fredrikson2014privacy,fredrikson2015model,shokri2017membership}. Privacy is, therefore, an important design factor for machine learning algorithms. 

To protect against diverse and sophisticated attacks, $\epsilon$-differential privacy has emerged as a theoretically rigorous definition of privacy with robust guarantees~\cite{dwork2006differential}. Informally, an algorithm is differentially private if the inclusion (or exclusion) of any specific data record cannot substantially alter the output. In this paper, we consider the task of privately releasing a function $f_{\mathcal{D}}(q) \to \mathbb{R}$ that applies a pairwise operation $k(x,q)$ to each record $\{x_1, ... x_N\}$ in a dataset $\mathcal{D}$ and returns the sum. Although one could directly evaluate $f_{\mathcal{D}}(q)$ and release the result with the exponential mechanism, we can only evaluate $f_{\mathcal{D}}$ a finite number of times before the privacy budget runs out. The function release problem is to release a private summary $\mathcal{S}_{\mathcal{D}}$ of $f_{\mathcal{D}}$ that can answer an unlimited number of queries~\cite{hall2013differential}. We also seek error bounds for all queries\footnote{By ``all queries'', we mean all values of $q$. There is no polynomial algorithm for all general queries~\cite{ullman2013answering}.}, not just the global minimum as with empirical risk minimization~\cite{chaudhuri2011differentially} or a finite set of linear queries as in~\cite{hardt2012simple}. 

Many machine learning problems can be solved in the function release framework. When $k(x,q)$ is a loss function, we can train a model by minimizing $f_{\mathcal{D}}$. When $k(x,q)$ is a kernel function, $f_{\mathcal{D}}$ is the kernel density estimate - a convenient way to approximate the likelihood function in a classification model. As a result, function release has received considerable theoretical attention. There are elegant, general and powerful techniques to release essentially any function~\cite{hall2013differential,alda2017bernstein,formalmirshani19a,wang2013efficient}. However, function release has not yet been widely adopted in practice because existing methods fail to scale beyond small, low-dimensional datasets. The practical utility of function release is plagued by issues such as quadratic runtime and exponential memory requirements. For instance, many algorithms release $f_{\mathcal{D}}$ via function approximation over an interpolation lattice, but the size of the lattice grows exponentially with dimensions. 

\paragraph{Our Contribution:}

In this work, we propose a scalable approach to function release using the RACE sketch, a recent development in (non-private) data streaming algorithms. The RACE sketch can approximate pairwise kernel sums on streaming data for a specific class of kernels known as \textit{locality sensitive hash (LSH) kernels}. By restricting our attention to LSH kernel sums, we obtain fast streaming algorithms for private function release. RACE sketches consist of a small array of integers ($\sim$4 MB) and are well-suited to large-scale distributed settings. Private sketches are sought after by practitioners because they combine algorithmic efficiency with privacy guarantees~\cite{apple2017}. We show how to construct a private RACE sketch $\mathcal{S}_{\mathcal{D}}$ for the LSH kernel sum $f_{\mathcal{D}}$. We prove that our sketch is $\epsilon$-differentially private and derive pointwise error bounds for our approximation to $f_{\mathcal{D}}$. Our bounds are competitive with existing methods but come at a smaller computation cost. RACE easily scales to datasets with hundreds of dimensions and millions of entries. 

Although the restriction to LSH kernels might seem limiting, we argue that most machine learning tasks can be performed using the RACE sketch. We show how to express classification, linear regression, kernel density estimation (KDE), anomaly detection and mode finding in terms of specific LSH kernel compositions. We conduct an exhaustive set of experiments with KDE, classification and linear regression. Our experiments show that RACE can release useful functions for many machine learning tasks with a competitive privacy-utility tradeoff.

\section{Background}
We consider a dataset $\mathcal{D}$ of $N$ points in $\mathbb{R}^d$. Although our analysis naturally extends to any metric space, we restrict our attention to $\mathbb{R}^d$ for the sake of presentation. 





\subsection{Differential Privacy}
We use the well-established definition of differential privacy~\cite{dwork2006differential}.

\begin{definition}
\label{def:priv} \textbf{Differential Privacy}~\cite{dwork2006differential} A randomized function $\mathcal{A}$ is said to provide $(\epsilon,\beta)$-differential privacy if for all neighboring databases $\mathcal{D}$ and $\mathcal{D}'$ (which differ in at most one element) and all subsets $S$ in the codomain of $\mathcal{A}$,
$$\mathrm{Pr}[\mathcal{A}(\mathcal{D}) \in S] \leq e^{\epsilon}\mathrm{Pr}[A(\mathcal{D}')\in S] + \beta$$
\end{definition}
The parameter $\epsilon$ is the privacy budget. The privacy budget limits the amount of information that $\mathcal{A}(\mathcal{D})$ can leak about any individual element of $\mathcal{D}$. If $\beta > 0$, then $\mathcal{A}(\mathcal{D})$ might leak more information, but only with probability up to $\beta$. In this paper, we consider $\beta = 0$, which is simply called ``$\epsilon$-differential privacy.'' The Laplace mechanism~\cite{dwork2006differential} is a general method to satisfy $\epsilon$-differential privacy. By adding zero-mean Laplace noise to a real-valued function, we obtain differential privacy if the noise is scaled based on the sensitivity of the function (Definition~\ref{def:sens}). 
\begin{definition}
\label{def:sens}
\textbf{Sensitivity}~\cite{dwork2006differential} For a function $\mathcal{A}:\mathcal{D}\rightarrow \mathbb{R}^d$, the L1-sensitivity of $\mathcal{A}$ is
$$\Delta = \sup||\mathcal{A}(\mathcal{D}) - \mathcal{A}(\mathcal{D}')||_1$$
where the supremum is taken over all neighboring datasets $\mathcal{D}$ and $\mathcal{D}'$
\end{definition}
\begin{theorem}\textbf{Laplace Mechanism}~\cite{dwork2006differential} Let $\mathcal{A} : \mathcal{D} \rightarrow \mathbb{R}^d$ be a non-private function with sensitivity $\Delta$ and let $\mathbf{z}\sim \mathrm{Lap}(\Delta / \epsilon)^d$ ($d$-dimensional i.i.d Laplace vector). Then the function $\mathcal{A}(\mathcal{D}) + \mathbf{z}$ provides $\epsilon$-differential privacy.
\end{theorem}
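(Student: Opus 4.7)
The plan is to verify Definition~\ref{def:priv} with $\beta = 0$ by bounding the ratio of output densities pointwise and then integrating over any measurable set $S$. Fix arbitrary neighboring datasets $\mathcal{D}$ and $\mathcal{D}'$ and write $\mu = \mathcal{A}(\mathcal{D})$, $\mu' = \mathcal{A}(\mathcal{D}')$. Since $\mathbf{z}$ is a vector of i.i.d.\ $\mathrm{Lap}(b)$ samples with $b = \Delta/\epsilon$, the random variable $\mathcal{A}(\mathcal{D}) + \mathbf{z}$ has density $p_{\mathcal{D}}(y) = (2b)^{-d}\exp(-\|y - \mu\|_1/b)$, and similarly $p_{\mathcal{D}'}(y) = (2b)^{-d}\exp(-\|y - \mu'\|_1/b)$. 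Both densities are strictly positive on $\mathbb{R}^d$, so the likelihood ratio is well defined everywhere.

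Next I would bound $p_{\mathcal{D}}(y)/p_{\mathcal{D}'}(y)$ at every $y$. Taking the logarithm gives $\log\bigl(p_{\mathcal{D}}(y)/p_{\mathcal{D}'}(y)\bigr) = (\|y - \mu'\|_1 - \|y - \mu\|_1)/b$, and the reverse triangle inequality for the $\ell_1$ norm yields $\|y - \mu'\|_1 - \|y - \mu\|_1 \leq \|\mu - \mu'\|_1$. By the L1-sensitivity bound in Definition~\ref{def:sens}, $\|\mu - \mu'\|_1 \leq \Delta$, so the ratio is at most $\exp(\Delta/b) = e^{\epsilon}$ for every $y \in \mathbb{R}^d$.

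Finally, I would lift the pointwise bound to set probabilities. For any measurable $S \subseteq \mathbb{R}^d$, integrating the inequality $p_{\mathcal{D}}(y) \leq e^{\epsilon} p_{\mathcal{D}'}(y)$ over $S$ gives $\Pr[\mathcal{A}(\mathcal{D}) + \mathbf{z} \in S] \leq e^{\epsilon}\Pr[\mathcal{A}(\mathcal{D}') + \mathbf{z} \in S]$, which is exactly the $\epsilon$-differential privacy condition (with $\beta = 0$).

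The argument is largely mechanical once the density is written down; the only subtlety, and the step I would be most careful about, is the alignment between the noise distribution and the norm used in the sensitivity definition. The log-density of the product Laplace factorizes as a sum of $|y_i - \mu_i|/b$, which is exactly $\|y - \mu\|_1/b$, so the $\ell_1$ sensitivity enters naturally; if one had defined sensitivity in $\ell_2$, the Laplace mechanism alone would not suffice and Gaussian noise would be required. Keeping this correspondence explicit is what makes the proof go through cleanly.
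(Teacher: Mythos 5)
Your proof is correct and is exactly the canonical argument for the Laplace mechanism: write the product-Laplace density, bound the pointwise likelihood ratio via the reverse triangle inequality and the $\ell_1$-sensitivity, and integrate over $S$. The paper does not prove this statement itself --- it is quoted as background from the cited reference --- and your argument matches the standard proof given there, including the correct observation that the $\ell_1$ norm in the sensitivity definition is what aligns with the factorized Laplace log-density.
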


\subsection{Related Work}
There are several techniques to release the kernel sum $f_{\mathcal{D}}$ with differential privacy. A common approach is to decompose $f_{\mathcal{D}}$ into a set of weighted basis functions~\cite{wasserman2010statistical}. We truncate the basis expansion to $m$ terms and represent $f_{\mathcal{D}}$ as a set of weights in $\mathbb{R}^m$. The weights are made private via the Laplace mechanism and used to release a private version of $f_{\mathcal{D}}$. Each basis term in the representation increases the quality of the approximation but degrades the quality of the weights, since the privacy budget $\epsilon$ is shared among the $m$ weights. This is a bias-variance tradeoff: we trade variance in the form of Laplace noise with bias in the form of truncation error. The Fourier basis~\cite{hall2013differential}, Bernstein basis~\cite{alda2017bernstein}, trigonometric polynomial basis~\cite{wang2013efficient} and various kernel bases have all been used for private function release. Such methods are most effective when $f_{\mathcal{D}}$ is a smooth function. 

An alternative set of techniques rely on functional data analysis ~\cite{formalmirshani19a} or synthetic databases ~\cite{hardt2012simple,balog2018differentially}. In~\cite{formalmirshani19a}, the authors use densities over function spaces to release a smoothed approximation to $f_{\mathcal{D}}$. The main idea of ~\cite{hardt2012simple} and ~\cite{balog2018differentially} is to release a set of weighted synthetic points that can be used to estimate $f_{\mathcal{D}}$. Table~\ref{table:relatedwork} compares these methods based on approximation error and computation. 

\begin{table}[t]
\begin{center}
\begin{tabular}{ l | l | l | l }
\toprule
Method & Error Bound & Runtime & Comments \\
\midrule
\makecell[l]{Bernstein\\polynomials~\cite{alda2017bernstein}} & $d^{\frac{d}{d+H}}\left(\frac{1}{\epsilon}\log \frac{1}{\delta} \right)^{\frac{H}{d+H}}$ & $O(dNM^d)$& \makecell[l]{$M\geq 2$. Memory is also\\ exponential in $d$.} \\
\midrule
\makecell[l]{PFDA~\cite{formalmirshani19a}} & 
$\frac{2}{\epsilon}\sqrt{\log \frac{2}{\beta} \log \frac{1}{\delta} \frac{C}{\phi}}$
& $O(dN^2)$ & \makecell[l]{$C$ and $\phi$ are task-dependent\\ $(\epsilon,\beta)$-differential privacy } \\
\midrule
\makecell[l]{MWEM~\cite{hardt2012simple}} & $N^{\frac{2}{3}}\left(\frac{\log N \log |Q|}{\epsilon}\right)^{1/3}$ & $O(dN|Q|)$ &  \makecell[l]{$Q$ is a set of query points. Holds \\ with probability $1 - 1 / \mathrm{poly}(|Q|)$ }\\
\midrule
\makecell[l]{Trigonometric\\polynomials~\cite{wang2013efficient}} & $\frac{1}{\epsilon}N^{\frac{2d}{2d+H}}$ & $O(dN^{1 + \frac{d}{2d+H}})$ & \makecell[l]{The result holds with probability\\ $1 - \delta$ for $\delta\geq10e^{-\frac{1}{5} N^{d / 2d + K}}$ }\\
\midrule
\makecell[l]{This work} & $\left(\frac{N}{\epsilon}\log \frac{1}{\delta}\right)^{\frac{1}{2}}$ & $O(dN)$ & \makecell[l]{
Applies only for LSH kernels. \\ Efficient streaming algorithm.}\\
\bottomrule
\end{tabular}
\caption{Summary of related methods to release the kernel sum $f_\mathcal{D} = \sum_\mathcal{D} k(\mathbf{x},\mathbf{q})$ for an $N$ point dataset $\mathcal{D}$ in $\mathbb{R}^d$. Unless otherwise stated, the error is attained with probability $1 - \delta$ and $\epsilon$-differential privacy. We hide constant factors and adjust results to estimate $f_\mathcal{D}$ rather than the KDE ($N^{-1}f_\mathcal{D}(\mathbf{q})$) when necessary. $H$ is a kernel smoothness parameter.}
\label{table:relatedwork}
\end{center}
\vspace{-0.5cm}
\end{table}

\subsection{Locality-Sensitive Hashing}
\label{sec:LSH}
\paragraph{LSH Functions:} An LSH family $\mathcal{F}$ is a family of functions $l(\mathbf{x}): \mathbb{R}^{d}\to \mathbb{Z}$ with the following property: Under $l(\mathbf{x})$, similar points have a high probability of having the same hash value. We say that a collision occurs whenever two points have the same hash code, i.e. $l(\mathbf{x}) = l(\mathbf{y})$. In this paper, we use a slightly different definition than the original~\cite{indyk1998approximate} because we require the collision probability at all points. 

\begin{definition}
\label{def:lsh}
We say that a hash family $\mathcal{F}$ is locality-sensitive with collision probability $k(\cdot,\cdot)$ if for any two points $\mathbf{x}$ and $\mathbf{y}$ in $\mathbb{R}^d$, $l(\mathbf{x}) = l(\mathbf{y})$ with probability $k(\mathbf{x},\mathbf{y})$
under a uniform random selection of $l(\cdot)$ from $\mathcal{F}$. 
\end{definition}

\paragraph{LSH Kernels:} When the collision probability $k(\mathbf{x},\mathbf{y})$ is a monotone decreasing function of the distance metric $\mathrm{dist}(\mathbf{x},\mathbf{y})$, one can show that $k$ is a positive semidefinite kernel function~\cite{coleman2020race}. We say that a kernel function $k(\mathbf{x},\mathbf{y})$ is an \textit{LSH kernel} if it forms the collision probability for an LSH family. For a kernel to be an LSH kernel, it must obey the conditions described in~\cite{chierichetti2012preserving}. A number of well-known LSH families induce useful kernels~\cite{gionis1999similarity}. For example, there are LSH kernels that closely resemble the cosine, Laplace and multivariate Student kernels~\cite{coleman2020race}. 

\subsection{RACE Sketch}
LSH kernels are interesting because there are efficient algorithms to estimate the quantity
\begin{equation}
    f_\mathcal{D}(\mathbf{q}) = \sum_{\mathbf{x}\in\mathcal{D}} k(\mathbf{x},\mathbf{q})
\end{equation}
when $k(\mathbf{x},\mathbf{q})$ is an LSH kernel. In~\cite{coleman2020race}, the authors present a one-pass streaming algorithm to estimate $f_\mathcal{D}$. The algorithm produces a RACE (Repeated Array of Count Estimators) sketch $\mathcal{S}_\mathcal{D}\in \mathbb{Z}^{R\times W}$, a 2D array of integers that we index using LSH functions. This array is sufficient to report $f_\mathcal{D}(\mathbf{q})$ for any query $\mathbf{q}\in\mathbb{R}^d$. We begin by constructing $R$ functions $\{l_1(\mathbf{x}),...l_R(\mathbf{x})\}$ from an LSH family $\mathcal{F}$ with the desired collision probability. When an element $\mathbf{x}$ arrives from the stream, we hash $\mathbf{x}$ to get $R$ hash values, one for each row of $\mathcal{S}_\mathcal{D}$. We increment row $i$ at location $l_i(\mathbf{x})$ and repeat for all elements in the dataset. To approximate $f_\mathcal{D}(\mathbf{q})$, we return the mean of $\mathcal{S}_\mathcal{D}[r,l_r(\mathbf{q})]$ over the $R$ rows. 


\paragraph{RACE for Density Estimation:} This streaming algorithm approximates the kernel density estimate (KDE) for all possible queries in a single pass. RACE is also a mergable summary with an efficient distributed implementation. In practice, RACE estimates the KDE with 1\% error using a 4 MB array, even for large high-dimensional datasets. To prove rigorous error bounds, observe that each row of $\mathcal{S}_\mathcal{D}$ is an unbiased estimator of $f_\mathcal{D}$. The main theoretical result of~\cite{coleman2020race} is stated below as Theorem~\ref{thm:race}. 

\begin{theorem}
\label{thm:race}
\textbf{Unbiased RACE Estimator}\cite{coleman2020race} Suppose that $X$ is the query result for one of the rows of $\mathcal{S}_\mathcal{D}$. That is, $X= \mathcal{S}_\mathcal{D}[r,l_r(\mathbf{q})]$

\noindent\begin{minipage}{0.3\textwidth}
$$ \mathbb{E}[X] = f_\mathcal{D}(\mathbf{q})$$
\end{minipage}
\begin{minipage}{0.7\textwidth}
$$ \mathrm{var}(X)\leq \left(\tilde{f}_\mathcal{D}(\mathbf{q})\right)^2 = \left(\sum_{\mathbf{x}\in\mathcal{D}}\sqrt{k(\mathbf{x},\mathbf{q})}\right)^2$$
\end{minipage}

\end{theorem}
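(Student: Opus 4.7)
The plan is to write $X$ as a sum of collision indicators, then handle expectation and second moment in turn. By the RACE construction, $X$ is precisely the number of dataset points that land in the same bucket as $\mathbf{q}$ under the random hash $l_r$, so
\begin{equation*}
X = \sum_{\mathbf{x}\in\mathcal{D}} \mathbf{1}\bigl[l_r(\mathbf{x}) = l_r(\mathbf{q})\bigr].
\end{equation*}
The expectation claim is then immediate from linearity and Definition~\ref{def:lsh}: $\mathbb{E}[X] = \sum_{\mathbf{x}} \Pr[l_r(\mathbf{x}) = l_r(\mathbf{q})] = \sum_{\mathbf{x}} k(\mathbf{x},\mathbf{q}) = f_\mathcal{D}(\mathbf{q})$.

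For the variance, I would expand
\begin{equation*}
\mathbb{E}[X^2] = \sum_{\mathbf{x},\mathbf{y}\in\mathcal{D}} \Pr\bigl[l_r(\mathbf{x}) = l_r(\mathbf{q}) \ \text{and}\ l_r(\mathbf{y}) = l_r(\mathbf{q})\bigr],
\end{equation*}
and bound each summand. The diagonal terms ($\mathbf{x} = \mathbf{y}$) collapse to $k(\mathbf{x},\mathbf{q})$ because the indicator is $\{0,1\}$-valued. For the off-diagonal terms, the joint event implies each of the two pairwise collisions, so its probability is at most $\min\bigl(k(\mathbf{x},\mathbf{q}),\,k(\mathbf{y},\mathbf{q})\bigr)$, and the elementary inequality $\min(a,b) \leq \sqrt{ab}$ for $a,b\ge 0$ yields
\begin{equation*}
\Pr\bigl[l_r(\mathbf{x}) = l_r(\mathbf{y}) = l_r(\mathbf{q})\bigr] \leq \sqrt{k(\mathbf{x},\mathbf{q})\,k(\mathbf{y},\mathbf{q})}.
\end{equation*}
Since the diagonal terms also satisfy this bound trivially (with equality), summing over all pairs factors into $\mathbb{E}[X^2] \leq \bigl(\sum_\mathbf{x} \sqrt{k(\mathbf{x},\mathbf{q})}\bigr)^2 = \tilde{f}_\mathcal{D}(\mathbf{q})^2$. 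Finally, $\mathrm{var}(X) = \mathbb{E}[X^2] - \mathbb{E}[X]^2 \leq \mathbb{E}[X^2]$ gives the stated bound.

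The only subtle step is the 3-way collision bound; everything else is bookkeeping. One might be tempted to compute $\Pr[l_r(\mathbf{x}) = l_r(\mathbf{y}) = l_r(\mathbf{q})]$ exactly, but the LSH property in Definition~\ref{def:lsh} only constrains pairwise collision probabilities, so a direct evaluation is not available without extra assumptions on $\mathcal{F}$. The $\min$--then--geometric-mean argument sidesteps this by using only pairwise information, which is exactly what the definition provides, and yields a bound that factors cleanly as a square. This is the step I expect to be the main obstacle for a reader, and it should be highlighted explicitly in the write-up.
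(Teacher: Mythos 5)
Your proof is correct: the indicator decomposition gives the expectation immediately, and the bound $\Pr[l_r(\mathbf{x}) = l_r(\mathbf{y}) = l_r(\mathbf{q})] \leq \min(k(\mathbf{x},\mathbf{q}), k(\mathbf{y},\mathbf{q})) \leq \sqrt{k(\mathbf{x},\mathbf{q})k(\mathbf{y},\mathbf{q})}$ is exactly the right way to make the second moment factor into a square using only the pairwise collision probabilities that Definition~\ref{def:lsh} supplies. Note that this paper does not prove the theorem itself --- it is imported from the cited RACE work --- but your argument is a faithful reconstruction of the standard derivation used there, so there is nothing to flag.
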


\paragraph{RACE for Empirical Risk Minimization:} Theorem~\ref{thm:race} holds for all collision probabilities, even those that are neither continuous nor positive semidefinite. Thus, we can use RACE to approximate the empirical risk for a variety of losses. Suppose we are given a dataset $\mathcal{D} = \{\mathbf{z}_1...\mathbf{z}_N\}$ of training examples and a loss function $L(\theta, \mathbf{z})$, where $\theta$ is a parameter that describes the model. Empirical Risk Minimization (ERM) consists of finding a parameter $\theta$ (and thus a model) that minimizes the mean loss over the training set. 
RACE can approximate the empirical risk when $L(\theta, \mathbf{z})$ can be expressed in terms of LSH collision probabilities. Although we cannot analytically find the gradient of the RACE count values, derivative-free optimization~\cite{conn2009introduction} is highly effective for RACE sketches. With the right LSH family, RACE can perform many machine learning tasks in the one-pass streaming setting. 

\section{Private Sketches with RACE}


We propose a private version of the RACE sketch. We obtain $\epsilon$-differential privacy by applying the Laplace mechanism to each count in the RACE sketch array. Algorithm~\ref{alg:sketch} introduces a differentially private method to release the RACE sketch, illustrated in Figure~\ref{fig:privateRACE}. It is straightforward to see that Algorithm~\ref{alg:sketch} only requires $O(NR)$ hash computations. Assuming fixed $R$, we have $O(dN)$ runtime.

\begin{minipage}{0.46\textwidth}
\begin{algorithm}[H]
\begin{algorithmic}
\STATE {\bf Input:} Dataset $\mathcal{D}$, privacy budget $\epsilon$, LSH family $\mathcal{F}$, dimensions $R\times W$
\STATE {\bf Output:} Private sketch $\mathcal{S}_{\mathcal{D}} \in \mathbb{Z}^{R\times W}$
\STATE {\bf Initialize:} $R$ independent LSH functions $\{l_1, ..., l_R\}$ from the LSH family $\mathcal{F}$
\STATE $\mathcal{S}_{\mathcal{D}} \leftarrow \mathbf{0}^{R\times W}$
\FOR{$\mathbf{x} \in \mathcal{D}$}
    \FOR {$r$ in $1$ to $R$}
        \STATE Increment $\mathcal{S}_{\mathcal{D}}[r,l_r(\mathbf{x})]$
    \ENDFOR
\ENDFOR
\STATE $\mathcal{S}_{\mathcal{D}} = \left \lfloor{\mathcal{S}_{\mathcal{D}} + Z}\right \rfloor$ where $Z \overset{\text{iid}}{\sim} \mathrm{Lap}(R\epsilon^{-1})$
 \end{algorithmic}
  \caption{Private RACE sketch}
 \label{alg:sketch}
\end{algorithm}
\end{minipage}
\begin{minipage}{0.46\textwidth}
\begin{algorithm}[H]
\begin{algorithmic}
\STATE {\bf Input:} Sketch $\mathcal{S}_{\mathcal{D}}$, query $\mathbf{q}$, the same $R$ LSH functions from Algorithm~\ref{alg:sketch}
\STATE {\bf Output:} Estimate of $N^{-1}f_{\mathcal{D}}(\mathbf{q})$
\STATE $\widehat{N} \leftarrow R^{-1} \sum_{i,j} \mathcal{S}_{\mathcal{D}}[i,j]$
\STATE $\hat{f}_{\mathcal{D}} \leftarrow 0$
\FOR {$r$ in $1$ to $R$}
    \STATE $\hat{f}_{\mathcal{D}} = \hat{f}_{\mathcal{D}} + 
    \frac{1}{R} \mathcal{S}_{\mathcal{D}}[r,l_r(\mathbf{q})]$
\ENDFOR
\STATE \textbf{Return:} $\hat{f}_{\mathcal{D}}/ \widehat{N}$
\end{algorithmic}
\caption{RACE query}
 \label{alg:query}
\end{algorithm}
\vspace{0.55cm}
\end{minipage}

\subsection{Privacy}
For the purposes of Definition~\ref{def:priv}, we consider the function $\mathcal{A}: \mathcal{D} \rightarrow \mathbb{R}^{R\times W}$ to be Algorithm~\ref{alg:sketch}. The codomain of $\mathcal{A}$ is the set of all RACE sketches with $R$ rows and $W$ columns. Our main theorem is that the value returned by $\mathcal{A}(\mathcal{D})$ is $\epsilon$-differentially private. That is, our sketch is differentially private. We begin the proof by applying the parallel composition theorem~\cite{dwork2014algorithmic} to the counters in one row, which each see a disjoint partition of $\mathcal{D}$. Then, we apply the sequential composition theorem~\cite{dwork2014algorithmic} to the set of $R$ rows. Due to space limitations, we defer the full proof to the Appendix. 


\begin{theorem}
For any $R > 0$, $W > 0$, and LSH family $\mathcal{F}$, the output of Algorithm~\ref{alg:sketch}, or the RACE sketch $\mathcal{S}_{\mathcal{D}}$, is $\epsilon$-differentially private. 
\end{theorem}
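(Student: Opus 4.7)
The plan is to exploit the structural observation that each data point $\mathbf{x}$ is routed to exactly one counter per row (namely column $l_r(\mathbf{x})$), so within any single row the $W$ counters act on a \emph{disjoint partition} of $\mathcal{D}$. Across rows, however, every data point affects every row, so the rows cannot be treated as disjoint. This leads naturally to a parallel-composition-inside-a-row plus sequential-composition-across-rows argument, exactly as flagged in the text.

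First I would analyze a single counter. Because a counter is a count of records falling into its hash bucket, its $L_1$ sensitivity is $1$ under the add/remove neighboring notion of Definition~\ref{def:priv}. Adding $\mathrm{Lap}(R/\epsilon)$ noise therefore yields $\epsilon/R$-differential privacy per counter by the Laplace mechanism. Next I would apply parallel composition to the $W$ counters in row $r$: since $l_r$ induces a partition of $\mathcal{D}$ and each counter depends only on its own block of that partition, the vector of $W$ noisy counts in row $r$ is $\epsilon/R$-differentially private. Then I would invoke sequential composition over the $R$ rows (each row sees the full dataset via a different hash function), giving a total privacy cost of $R \cdot (\epsilon/R) = \epsilon$. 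Finally, the floor $\lfloor \cdot \rfloor$ applied at the end of Algorithm~\ref{alg:sketch} is a data-independent post-processing step, which preserves differential privacy, completing the argument.

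The main obstacle, which I would be careful to handle cleanly, is the justification of parallel composition within a row. Under an add/remove view of ``differ in at most one element,'' the added or removed record changes exactly one block of the partition induced by $l_r$, so standard parallel composition applies verbatim. Under a replace-based view, the same argument still goes through because the two affected blocks each experience only a single record change, so each of the $W$ per-counter mechanisms individually satisfies $\epsilon/R$-DP on its block. A secondary technicality is that the LSH functions $l_1,\ldots,l_R$ must be drawn independently of $\mathcal{D}$ (they are, by Algorithm~\ref{alg:sketch}), so the partition structure and the noise are independent of the data; otherwise the parallel composition step could fail. Once these points are addressed, the remaining arithmetic is immediate.
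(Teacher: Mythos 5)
Your proposal is correct and follows essentially the same route as the paper: sensitivity-$1$ counters with Laplace noise, parallel composition across the $W$ disjoint buckets within a row, then sequential composition over the $R$ rows with budget $\epsilon/R$ each. The extra care you take with the add/remove versus replace neighboring notions and with the floor as post-processing is sound but not a substantive departure from the paper's argument.
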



\subsection{Utility}


Since one can construct many learning algorithms from a sufficiently good estimate of $f_{\mathcal{D}}$~\cite{alda2017bernstein}, we focus on utility guarantees for the RACE estimate of $f_{\mathcal{D}}(\mathbf{q})$. Since RACE is a collection of unbiased estimators for $f_{\mathcal{D}}(\mathbf{q})$, our proof strategy is to bound the distance between the RACE estimate and the mean. To bound the variance of the private RACE estimator, we add the independent Laplace noise variance $2R^2\epsilon^{-2}$ to the bound from Theorem~\ref{thm:race}. Theorem~\ref{thm:tradeoff} follows using the median-of-means procedure.

\begin{theorem}
\label{thm:tradeoff}
Let $\hat{f}_{\mathcal{D}}(\mathbf{q})$ be the median-of-means estimate using an $\epsilon$-differentially private RACE sketch with $R$ rows and $\tilde{f}_{\mathcal{D}}(\mathbf{q}) = \sum_{\mathcal{D}} \sqrt{k(\mathbf{x},\mathbf{q})}$. Then with probability $1 - \delta$, 
$$|\hat{f}_{\mathcal{D}}(\mathbf{q}) - f_{\mathcal{D}}(\mathbf{q})| \leq \left(\frac{\tilde{f}^2_\mathcal{D}(\mathbf{q})}{R} + \frac{2}{\epsilon^2}R \right)^{1/2}\sqrt{32 \log 1/\delta}$$

\end{theorem}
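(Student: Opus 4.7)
The plan is to treat each row of the private sketch as an independent, unbiased noisy estimator of $f_\mathcal{D}(\mathbf{q})$, upper-bound its variance by combining Theorem~\ref{thm:race} with the variance of the Laplace noise, and then amplify concentration with a standard median-of-means argument using $B = \lceil 8\log(1/\delta) \rceil$ buckets.

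First, I would fix a row $r$ and consider the private row count $Y_r = \mathcal{S}_\mathcal{D}[r, l_r(\mathbf{q})]$. By Theorem~\ref{thm:race}, the pre-noise count $X_r$ is unbiased with $\mathrm{var}(X_r) \leq \tilde{f}^2_\mathcal{D}(\mathbf{q})$. Since the perturbation $Z_r \sim \mathrm{Lap}(R/\epsilon)$ is independent of the LSH randomness and has mean zero and variance $2R^2/\epsilon^2$, the per-row estimator $Y_r = X_r + Z_r$ (ignoring flooring for now) satisfies
$$\mathbb{E}[Y_r] = f_\mathcal{D}(\mathbf{q}), \qquad \mathrm{var}(Y_r) \leq \tilde{f}^2_\mathcal{D}(\mathbf{q}) + \frac{2R^2}{\epsilon^2} =: \sigma^2.$$
The integer floor in Algorithm~\ref{alg:sketch} contributes at most a unit of deterministic bias per cell, which is dominated by the other error terms and can be absorbed into constants.

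Second, I would partition the $R$ rows into $B$ disjoint buckets of size $R/B$ and let $\bar{Y}_b$ denote the bucket mean, so that $\mathbb{E}[\bar{Y}_b] = f_\mathcal{D}(\mathbf{q})$ and $\mathrm{var}(\bar{Y}_b) \leq B\sigma^2/R$. Chebyshev's inequality with threshold $t = 2\sigma\sqrt{B/R}$ then yields $\Pr(|\bar{Y}_b - f_\mathcal{D}(\mathbf{q})| \geq t) \leq 1/4$. The median of the $B$ bucket means lies outside $[f_\mathcal{D}(\mathbf{q}) - t, f_\mathcal{D}(\mathbf{q}) + t]$ only if at least $B/2$ of the bucket means do, and by Hoeffding applied to $B$ independent Bernoulli$(\leq 1/4)$ indicators this event has probability at most $\exp(-B/8) \leq \delta$ for the chosen $B$.

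Third, substituting $B = 8\log(1/\delta)$ and $\sigma^2 = \tilde{f}^2_\mathcal{D}(\mathbf{q}) + 2R^2/\epsilon^2$ gives
$$t = \sigma\sqrt{\frac{32\log(1/\delta)}{R}} = \sqrt{\left(\frac{\tilde{f}^2_\mathcal{D}(\mathbf{q})}{R} + \frac{2R}{\epsilon^2}\right) \cdot 32\log\frac{1}{\delta}},$$
which is exactly the claimed bound. The only non-routine point is verifying that the LSH hashing randomness and the Laplace perturbations combine additively in variance (immediate from independence) and that the floor in Algorithm~\ref{alg:sketch} does not disturb the concentration; the rest is a textbook median-of-means calculation on top of Theorem~\ref{thm:race}.
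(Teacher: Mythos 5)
Your proposal is correct and follows essentially the same route as the paper: bound the per-row variance by adding the Laplace noise variance $2R^2/\epsilon^2$ to the bound from Theorem~\ref{thm:race}, then apply median-of-means with $8\log(1/\delta)$ groups to get the $\sqrt{32\sigma^2\log(1/\delta)/R}$ deviation. The only difference is that you unpack the median-of-means concentration (Chebyshev plus a Chernoff/Hoeffding count on bad buckets) where the paper simply cites it as a lemma from Alon--Matias--Szegedy, and you additionally flag the flooring step, which the paper silently ignores.
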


Theorem~\ref{thm:tradeoff} suggests a tradeoff for which there is an optimal value of $R$. If we increase the number of rows $R$ in the sketch, we improve the estimator but must add more Laplace noise. To get our main utility guarantee, we choose an optimal $R$ that minimizes the error bound.

\begin{corollary}
\label{cor:util}
Put $R = \lceil\frac{1}{\sqrt{2}}\tilde{f}_\mathcal{D}(\mathbf{q})\epsilon\rceil$. Then the approximation error bound is
$$|\hat{f}_{\mathcal{D}}(\mathbf{q}) - f_{\mathcal{D}}(\mathbf{q})| \leq 16\left(\frac{\tilde{f}_\mathcal{D}(\mathbf{q})}{\epsilon}\log 1/\delta\right)^{1/2} \leq 16\sqrt{\frac{N}{\epsilon}\log 1/\delta}$$
\end{corollary}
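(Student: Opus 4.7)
The plan is to treat Theorem~\ref{thm:tradeoff} as an upper bound parameterized by $R$ and then to optimize the right-hand side over $R$. Only the factor $\bigl(\tilde{f}_{\mathcal{D}}^2(\mathbf{q})/R + 2R/\epsilon^2\bigr)^{1/2}$ depends on $R$, so it suffices to minimize the quantity inside the square root. Writing $g(R) = \tilde{f}_{\mathcal{D}}^2(\mathbf{q})/R + 2R/\epsilon^2$, the AM--GM inequality (or an elementary derivative calculation) gives
$$g(R) \;\geq\; 2\sqrt{\frac{\tilde{f}_{\mathcal{D}}^2(\mathbf{q})}{R}\cdot\frac{2R}{\epsilon^2}} \;=\; \frac{2\sqrt{2}\,\tilde{f}_{\mathcal{D}}(\mathbf{q})}{\epsilon},$$
with equality precisely when $\tilde{f}_{\mathcal{D}}^2(\mathbf{q})/R = 2R/\epsilon^2$, i.e., at $R^{\star} := \tilde{f}_{\mathcal{D}}(\mathbf{q})\epsilon/\sqrt{2}$. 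This is exactly the (pre-ceiling) value prescribed in the corollary.

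Next I would substitute $R = \lceil R^{\star}\rceil$ back into Theorem~\ref{thm:tradeoff}. Since $R^{\star} \leq R \leq R^{\star}+1$, we have $\tilde{f}_{\mathcal{D}}^2(\mathbf{q})/R \leq \sqrt{2}\,\tilde{f}_{\mathcal{D}}(\mathbf{q})/\epsilon$ and $2R/\epsilon^2 \leq \sqrt{2}\,\tilde{f}_{\mathcal{D}}(\mathbf{q})/\epsilon + 2/\epsilon^2$, so $g(R) \leq 2\sqrt{2}\,\tilde{f}_{\mathcal{D}}(\mathbf{q})/\epsilon + 2/\epsilon^2$. Plugging this into the bound of Theorem~\ref{thm:tradeoff} yields a bound of the form $c\sqrt{(\tilde{f}_{\mathcal{D}}(\mathbf{q})/\epsilon)\log(1/\delta)}$. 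Ignoring the $2/\epsilon^2$ slack contributed by the ceiling, the leading constant is $\sqrt{2\sqrt{2}\cdot 32} = 8\cdot 2^{1/4}\approx 9.51$; the claimed constant $16$ is deliberately loose so as to absorb the ceiling correction. The second inequality then follows from the fact that $k(\mathbf{x},\mathbf{q})$ is an LSH collision probability and hence lies in $[0,1]$; therefore $\sqrt{k(\mathbf{x},\mathbf{q})} \leq 1$, giving $\tilde{f}_{\mathcal{D}}(\mathbf{q}) = \sum_{\mathbf{x}\in\mathcal{D}}\sqrt{k(\mathbf{x},\mathbf{q})} \leq N$ uniformly in $\mathbf{q}$.

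The argument is essentially a one-variable optimization, so there is no real obstacle at the level of ideas. The only mildly delicate point is handling the integrality constraint on $R$: for very small $\tilde{f}_{\mathcal{D}}(\mathbf{q})\epsilon$, the ceiling forces $R = 1$ and strictly one would need a separate verification that the $2/\epsilon^2$ term is dominated, but this is a lower-order correction and fits comfortably inside the generous constant $16$. Everything else is routine substitution.
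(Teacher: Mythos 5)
Your proposal is correct and follows essentially the same route as the paper's own proof: minimize the $R$-dependent factor in Theorem~\ref{thm:tradeoff} (the paper via the derivative of $\sqrt{a/R+bR}$, you via AM--GM, which is equivalent), substitute the optimizer, and bound $\tilde{f}_{\mathcal{D}}(\mathbf{q})\leq N$ using $k\leq 1$. You are in fact somewhat more careful than the paper, which silently plugs in the non-integer $R^{\star}$ and never accounts for the ceiling or the resulting $2/\epsilon^{2}$ slack that you correctly flag in the small-$\tilde{f}_{\mathcal{D}}(\mathbf{q})\epsilon$ regime.
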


If we divide both sides of Corollary~\ref{cor:util} by $f_{\mathcal{D}}$, we bound the relative (or percent) error rather than the absolute error. Corollary~\ref{cor:util} suggests that it is hardest to achieve a small relative error when $f_{\mathcal{D}}$ is small. This agrees with our intuition about how the KDE should behave under differential privacy guarantees. Fewer individuals make heavy contributions to $f_{\mathcal{D}}$ in low-density regions than in high-density ones, so the effect of the noise is worse. 


\begin{figure*}[t]
\centering
\includegraphics[width=\textwidth,keepaspectratio]{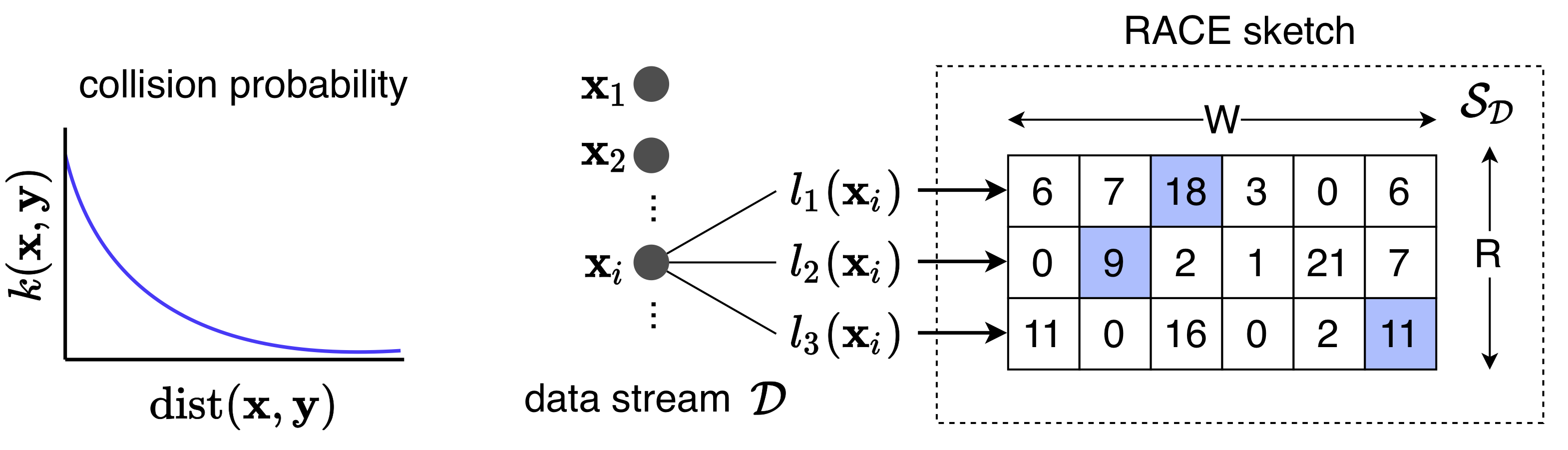}
\vspace{-0.5cm}
\caption{Illustration of Algorithm~\ref{alg:sketch} for $\mathcal{S}_{\mathcal{D}}\in \mathbb{Z}^{3\times6}$. We hash each element in the stream with LSH functions $\{l_1,l_2,l_3\} \in \mathcal{F}$ having collision probability $k(\mathbf{x},\mathbf{y})$. In this example, $l_1(\mathbf{x}_i) = 3$, $l_2(\mathbf{x}_i) = 2$ and $l_3(\mathbf{x}_i) = 6$. We increment the highlighted cells. The addition of the Laplace noise is not shown in the figure, but is done by perturbing each count in $\mathcal{S}_{\mathcal{D}}$.}
\label{fig:privateRACE}
\vspace{-0.2cm}
\end{figure*}

\section{Applications}

Because RACE can release LSH kernel sums, our sketch is broadly useful for many algorithms in machine learning. In particular, we discuss private density estimation, classification, regression, mode finding, anomaly detection and diversity sampling using RACE. 

\textbf{Kernel Density Estimation:} To use RACE for KDE, we select one or more LSH kernels and construct sketches with Algorithm~\ref{alg:sketch}. We require one RACE sketch for each kernel and bandwidth setting, and we return the result of Algorithm~\ref{alg:query}. Note that the query no longer has access to $N$, the number of elements in the private dataset. Therefore, we estimate $N$ directly from the private sketch.

\textbf{Mode Finding:} One can apply gradient-free optimization to KDE to recover the modes of the data distribution. This works surprisingly well, but in general the KDE is a non-convex function. One can also apply linear programming techniques that use information about the hash function partitions to identify a point in a partition with the highest count values.

\textbf{Naive Bayes Classification:} Using kernel density classification, a well-developed result from statistical hypothesis testing~\cite{john1995estimating}, we can construct classifiers with RACE under both the maximum-likelihood and maximum a posteriori (MAP) decision rules. Suppose we are given a training set $\mathcal{D}$ with $M$ classes $C_1,... C_M$ and a query $\mathbf{q}$. We can represent the empirical likelihood $\mathrm{Pr}[\mathbf{q}|C_i,\mathcal{D}]$ with a sketch of the KDE for class $i$. Algorithm~\ref{alg:query} returns an estimate of this probability, which may be used directly by a naive Bayes classifier or other type of probabilistic learner. 

\textbf{Anomaly Detection / Sampling:} Anomaly detection can be cast as a KDE classification problem. If Algorithm~\ref{alg:query} reports a low density for $\mathbf{q}$, then the training set contains few elements similar to $\mathbf{q}$ and thus $\mathbf{q}$ is an outlier. This principle is behind the algorithms in~\cite{luo2018arrays} and~\cite{coleman2019diversified}.


\textbf{Linear Regression:} If we use an asymmetric LSH family~\cite{shrivastava2014asymmetric}, we can construct a valid surrogate loss for the linear regression loss $(\langle[\mathbf{x},y],[\theta,-1]\rangle)^2$. The key insight is that the signed random projection (SRP) LSH kernel is monotone in the inner product. If we apply SRP to both $[\mathbf{x},y]$ and $-[\mathbf{x},y]$, we obtain an LSH kernel with two components. One component is monotone increasing with the inner product, while the other is monotone decreasing. The resulting surrogate loss is a monotone function of the L2 loss and a convex function of $\theta$. 



\section{Experiments}

\begin{table*}[t]
  \centering
  \begin{tabular}{ l|c|c|c|l|l  } 
\toprule
Dataset & $N$ & $d$ & $\sigma$ & Description & Task\\
\midrule
NYC & 25k & 1 & 5k & NYC salaries (2018) & \multirow{4}[1]{*}{KDE} \\
SF & 29k & 1 & 5k & SF salaries (2018) &  \\
skin & 241k & 3 & 5.0 & RGB skin tones & \\
codrna & 57k & 8 & 0.5 & RNA genomic data & \\
\midrule
nomao & 34k & 26 & 0.6 & User location data & \multirow{3}[1]{*}{Classification}\\
occupancy & 17k & 5 & 0.5 & Building occupancy & \\
pulsar & 17k & 8 & 0.1 & Pulsar star data & \\
\midrule
airfoil & 1.4k & 9 & - & Airfoil parameters and sound level & \multirow{3}[1]{*}{Regression}\\
naval & 11k & 16 & - & Frigate turbine propulsion &  \\
gas & 3.6k & 128 & - & Gas sensor, different concentrations & \\
\bottomrule
\end{tabular}
\caption{Datasets used for KDE and classification experiments. Each dataset has $N$ entries with $d$ features. $\sigma$ is the kernel bandwidth.}
\label{tab:datasets}
\vspace{-0.2cm}
\end{table*}

\begin{table}[t]
\centering
\begin{tabular}{c|c|c|c|c}
\toprule
& PFDA & Bernstein & KME & RACE \\
\midrule
Preprocess & $> 3$ days & 2.3 days & 6 hr & 13 min \\
Query & - & 6.2 ms & 1.2 ms & 0.4 ms \\
\bottomrule
\end{tabular}
\vspace{0.1cm}
\caption{Computation time for KDE on the skin dataset. Note that we were unable to run PFDA on this dataset. Instead, we report the PFDA runtime for the (smaller) NYC dataset.}
\label{tab:computation}
\vspace{-0.5cm}
\end{table}

We perform an exhaustive comparison for KDE, classification and regression. For KDE, we estimate the density of the salaries for New York City (NYC) and San Francisco (SF) city employees in 2018, as well as the high-dimensional densities for the skin and codrna UCI datasets. We use the same LSH kernel as the authors of~\cite{coleman2020race}. We use UCI datasets for the regression and classification experiments. Most private KDE methods require the data to lie in the unit sphere or cube; we scale the inputs accordingly. Table~\ref{tab:datasets} presents the datasets used in our experiments. We make the following considerations for our baseline methods.




\textbf{KDE:} We implemented several function release methods from Table~\ref{table:relatedwork}. We also compare against the Fourier basis estimator~\cite{hall2013differential,wasserman2010statistical} and the kernel mean embedding (KME) method~\cite{balog2018differentially}. We use the Python code released by~\cite{balog2018differentially} and implemented the other methods in Python. To give a fair comparison, we profiled and optimized each of our baselines. We were unable to run Fourier and PFDA in more than one dimension because they both require function evaluations on a Bravais lattice, which scales exponentially. We show the density estimates and their errors in Figure~\ref{fig:KDEResults}. 

\textbf{Classification:} We compare a maximum-likelihood RACE classifier against a regularized logistic regression classifier trained using objective perturbation~\cite{chaudhuri2011differentially}. We average over the Laplace noise and report the accuracy on a held-out test set in Figure~\ref{fig:ClassResults}.


\textbf{Regression:} We compare RACE regression against five algorithms: sufficient statistics perturbation (SSP), objective perturbation (ObjPert), posterior sampling (OPS), and adaptive versions of SSP and OPS (AdaSSP and AdaOPS~\cite{wang2018revisiting}). We use the MatLab code released by~\cite{wang2018revisiting} and report the mean squared error (MSE) on a held-out test set in Figure~\ref{fig:RegressionResults}.

\begin{figure*}[t]
\centering
  \includegraphics[width=\textwidth,keepaspectratio]{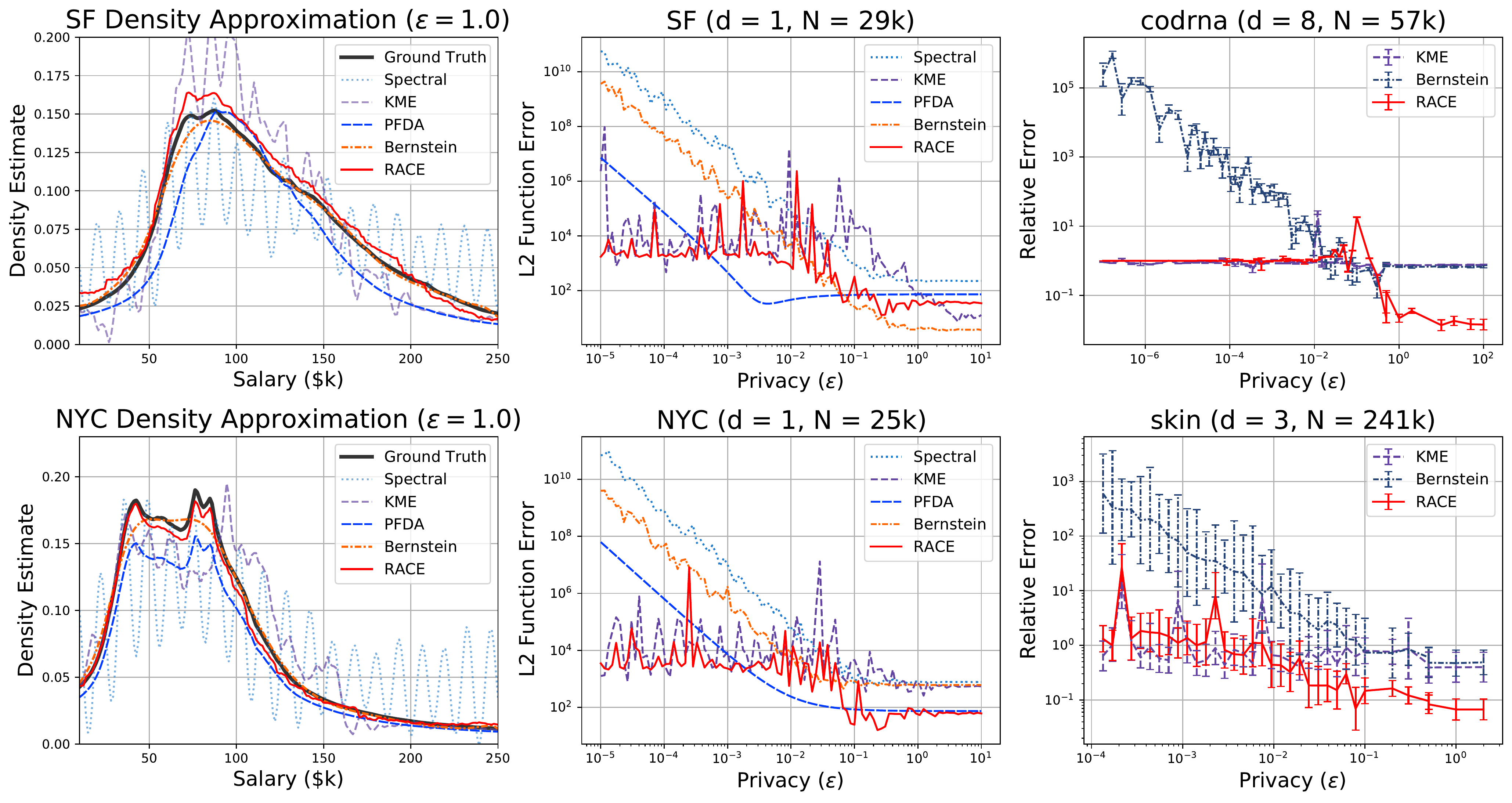}
\vspace{-0.5cm}
\caption{Privacy-utility tradeoff for private function release methods. We report the L2 function error and the mean relative error for 2000 held-out queries.}
\label{fig:KDEResults}
\vspace{-0.2cm}
\end{figure*}





\textbf{Computation: } Table~\ref{tab:computation} displays the computation time needed to construct a useful function release. Note that Bernstein release can run faster if we use fewer interpolation points (see Table~\ref{table:relatedwork}), but we still required at least 12 hours of computation for competitive results. The Bernstein mechanism requires many binomial coefficient evaluations, which were expensive even when we used optimized C code. KME requires a large-scale kernel matrix computation, and PFDA required several days for an expensive eigenvalue computation. RACE construction time varies based on $R$, but is substantially faster than all baselines.

\begin{figure*}[t]
\centering
  \includegraphics[width=\textwidth,keepaspectratio]{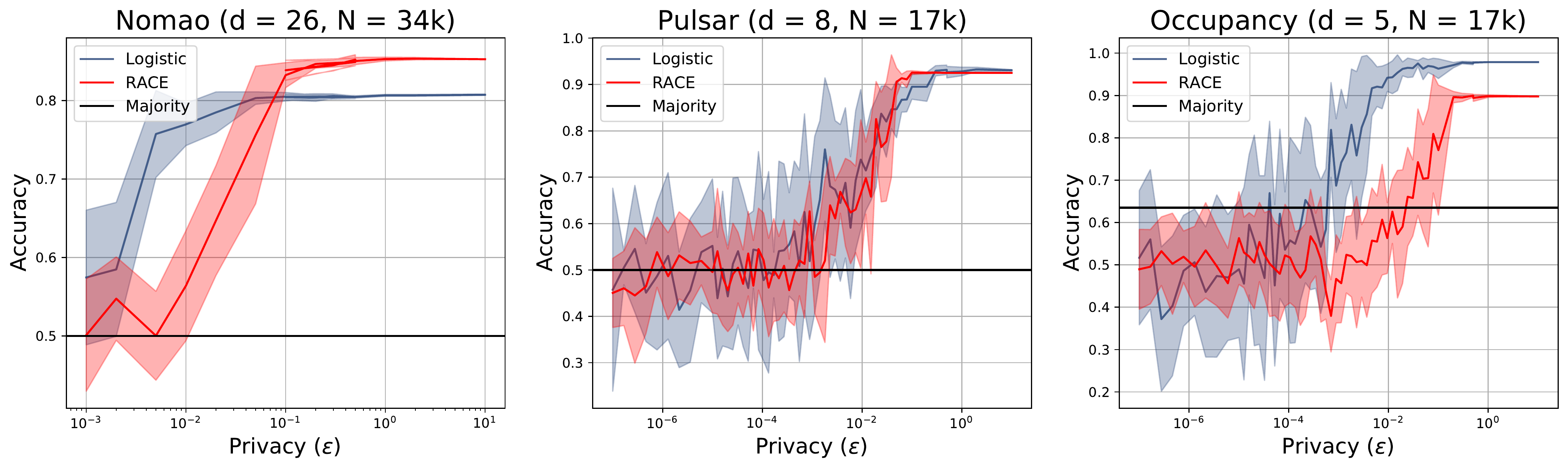}
\vspace{-0.5cm}
\caption{Binary classification experiments. We show the privacy-utility tradeoff for a private logistic regression classifier and the RACE max-likelihood classifier. Average over 10 repetitions. }
\label{fig:ClassResults}
\end{figure*}
\begin{figure*}[t]
\centering
  \includegraphics[width=\textwidth,keepaspectratio]{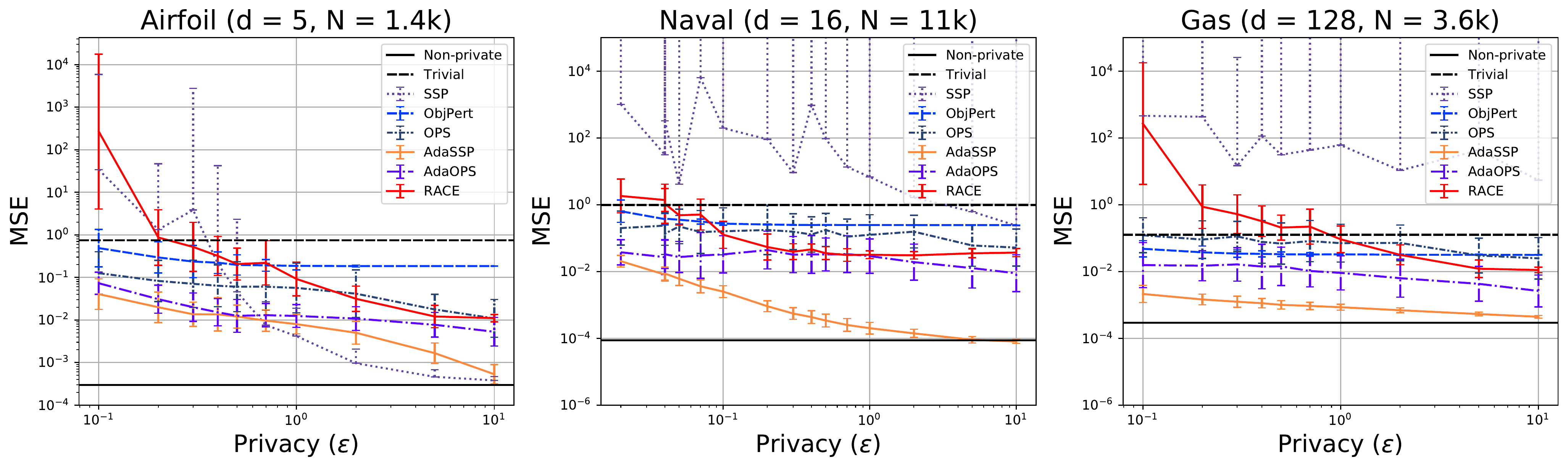}
\vspace{-0.5cm}
\caption{Linear regression experiments. We show the privacy-utility tradeoff for RACE and several other linear regression methods. Average over 10 repetitions. }
\label{fig:RegressionResults}
\vspace{-0.5cm}
\end{figure*}




\section{Discussion}

\paragraph{Function Release at Scale:}
RACE is ideal for private function release in large-scale distributed applications. Although PFDA and the Bernstein mechanism have the strongest error bounds, they required days to produce the experimental results in Figure~\ref{fig:KDEResults}. This is a serious barrier in practice - it would require $> 2^{128} \approx 10^{38}$ computations to run the Bernstein mechanism on the UCI gas dataset. The RACE sketch has a small memory footprint, inexpensive streaming updates and can quickly run on high-dimensional datasets. We believe that RACE can make differentially private function release a viable tool for real-world applications. 



Our sketch is also convenient to use and deploy in a production environment because it is relatively insensitive to hyperparameters. In general, we found that density estimation requires a larger sketch with more rows $R$ than classification or regression. Classification and regression problems benefit from a smaller hash range $W$ than function release. Higher dimensional problems also require a larger hash range. However, these choices are not critical and any $R \in [100,1\text{k}]$ with $W \in [100,1\text{k}]$ will provide good results. Hyperparameter tuning can be accomplished using very little of the overall privacy budget.





\paragraph{Privacy and Utility:}
Our sketch beats interpolation-based methods for private function release when the dataset has more than a few dimensions and when $f_{\mathcal{D}}$ is not smooth. Depending on smoothness, our $\sqrt{N\epsilon^{-1}}$ error rate improves upon the rates of~\cite{wang2013efficient} and~\cite{hardt2012simple} for LSH kernels. In our experiments (Figure~\ref{fig:KDEResults}), the Bernstein mechanism outperforms RACE on the SF dataset but fails to capture the nuances of the NYC salary distribution, which has sharp peaks. RACE preserves the details of $f_{\mathcal{D}}$ because the Laplace noise can only make local changes to the RACE structure. Suppose we generate a Laplace noise that is unusually large for one particular RACE counter. Queries that do not involve the problematic counter are unaffected. If we perturb one or two of the most important weights of a series estimator, the changes propagate to all queries. 

Although RACE can only estimate LSH kernels, the space of LSH kernels is sufficiently large that we can release many useful functions. Our RACE surrogate loss outperformed objective perturbation~\cite{chaudhuri2011differentially} and compared favorably to other linear regression baselines. RACE also performed well in our classification experiments for $\epsilon > 10^{-1}$, providing a competitive utility tradeoff for practical privacy budgets. Our experiments show that RACE can privately release useful functions for many machine learning problems.

\section{Conclusion}
We have presented RACE, a differentially private sketch that can be used for a variety of machine learning tasks. RACE is competitive with the state of the art on tasks including density estimation, classification, and linear regression. RACE sketches can be constructed in the one-pass streaming setting and are highly computationally efficient. At the same time, our sketch offers good performance on many machine learning tasks and is reasonably efficient in terms of the privacy budget. Given the utility, simplicity and speed of the algorithm, we expect that RACE will enable differentially private machine learning in large-scale settings.

\section*{Broader Impact}

Privacy is a serious concern in machine learning and a barrier to the widespread adoption of useful technology. For example, machine learning models can substantially improve medical treatments, inform social policy, and help with financial decision-making. While such applications are overall beneficial to society, they often rely on sensitive personal data that users may not want to disclose. For instance, a 2019 survey found that a large majority ($\sim$ 80\%) of Americans are concerned with but do not understand how technology companies use the data they collect~\cite{pew2019}. Data management practices have recently come under increased scrutiny, especially after the enactment of the European GDPR Act in 2018. To improve public confidence in their services, Google~\cite{erlingsson2014rappor} and Apple~\cite{apple2017} both advertise differential privacy as a central part of their data collection process. 

Differential privacy provides strong protection against malicious data use but is difficult to apply in practice. This is particularly true for functional data release, which is often prohibitively expensive in terms of computation cost. Our work substantially reduces the computational cost of this useful technique to achieve differential privacy. By reducing the computational requirement, we make privacy more accessible and attainable for data publishers, since large computing infrastructure is no longer needed to release private information. This directly reduces the economic cost of data privacy, an important consideration for governments and private companies. Our sketching method also allows private functional release to be applied to more real-world datasets, making it easier to release high-dimensional information with differential privacy.

\section*{Appendix}
This section contains proofs and detailed discussion of the theorems. 
\setcounter{theorem}{2}
\setcounter{corollary}{0}

\subsection*{Privacy}
We consider a function $\mathcal{A}: \mathcal{D} \rightarrow \mathbb{R}^{R\times W}$ that takes in a dataset and outputs an $R\times W$ RACE sketch. $\mathcal{A}$ is a randomized function, where the randomness is over the choice of the LSH functions and the (independent) Laplace noise. In this context, the $\epsilon$-differentially private query is the RACE sketch $\mathcal{S}_{\mathcal{D}}$. Our main theorem is that the sketch is $\epsilon$-differentially private. Because differential privacy is robust to post-processing, we can then query $\mathcal{S}_{\mathcal{D}}$ as many times as desired once the sketch is released. 

\paragraph{Proof Sketch:} The sketch $\mathcal{S}_{\mathcal{D}}$ consists of $R$ independent rows. Each row consists of $W$ count values. First, we prove privately release \textit{one} of the rows by adding Laplace noise with variance $\epsilon^{-1}$ (i.e. sensitivity $\Delta = 1$). This is described in Lemma~\ref{lem:row}. Then, we apply Lemma~\ref{lem:row} with $\epsilon / R$ to release \textit{all} the rows, proving the main theorem. 
\begin{lemma}
\label{lem:row}
Consider one row of the RACE sketch, and add independent Laplace noise $\mathrm{Lap}(1/\epsilon)$ to each counter. The row can be released with $\epsilon$-differential privacy. 
\end{lemma}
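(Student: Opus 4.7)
The plan is to prove Lemma~\ref{lem:row} by applying parallel composition of differential privacy across the $W$ counters within a single row, after observing that each data record contributes to exactly one counter.

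First, I would fix (condition on) the LSH function $l_r$, which is drawn from $\mathcal{F}$ independently of the dataset. For this fixed $l_r$, each point $\mathbf{x}\in\mathcal{D}$ increments exactly one of the $W$ counters, namely the counter at column $l_r(\mathbf{x})$. Consequently, the $W$ counters can be viewed as functions computed on a disjoint partition $\{\mathcal{D}_1,\ldots,\mathcal{D}_W\}$ of the input dataset $\mathcal{D}$, where $\mathcal{D}_j = \{\mathbf{x}\in\mathcal{D} : l_r(\mathbf{x}) = j\}$. This is the essential structural observation enabling parallel composition.

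Next, I would argue that each individual counter, after Laplace perturbation, is $\epsilon$-differentially private. The $j$-th counter is simply the cardinality $|\mathcal{D}_j|$, a count query whose L1-sensitivity is $1$ by Definition~\ref{def:sens}: replacing a single record in $\mathcal{D}$ can change $|\mathcal{D}_j|$ by at most $1$. Adding $\mathrm{Lap}(1/\epsilon)$ noise therefore achieves $\epsilon$-DP for that counter by the Laplace mechanism. I would then invoke parallel composition~\cite{dwork2014algorithmic}: because a change to one record in $\mathcal{D}$ alters at most one of the $\mathcal{D}_j$'s (the bucket that record hashes to), the overall $\epsilon$ budget does not add up across the $W$ counters, and the vector of noisy counters is $\epsilon$-differentially private. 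The final floor operation $\lfloor \cdot \rfloor$ in Algorithm~\ref{alg:sketch} is pure post-processing and preserves the guarantee.

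Finally, I would unconditional the argument on $l_r$. Since $l_r$ is drawn from $\mathcal{F}$ without looking at the data, and the $\epsilon$-DP guarantee above holds for every realization of $l_r$, the marginal distribution over the sketch row still satisfies the $\epsilon$-DP inequality in Definition~\ref{def:priv} (integrating the inequality against the data-independent distribution of $l_r$). The main subtlety I expect is this last step: one must verify that the randomness of $l_r$ does not introduce correlations with $\mathcal{D}$ that would break the per-realization guarantee. Because $\mathcal{F}$ is chosen before any data is observed, this is immediate, but it is worth stating explicitly. The rest of the argument reduces to a clean application of the Laplace mechanism and parallel composition.
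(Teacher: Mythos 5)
Your proof is correct and follows essentially the same route as the paper's: per-counter sensitivity of $1$, the Laplace mechanism on each counter, and parallel composition over the $W$ disjoint hash buckets. The only additions --- explicitly conditioning on and then marginalizing over the data-independent choice of $l_r$, and noting that the floor is post-processing --- are refinements the paper leaves implicit, not a different argument.
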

\begin{proof}
Consider just one of the $W$ counters. It is easy to see that the sensitivity of the counter is 1 because changing a single element from the dataset can only create a change of $\pm1$ in the counter. By the Laplace mechanism, this counter can be released with $\epsilon$-differential privacy by adding Laplace noise $\mathrm{Lap}(1 / \epsilon)$. 

The row contains $W$ counters, and we can release each one with the Laplace mechanism. Therefore, we have $W$ mechanisms $\mathcal{M}_1, ... \mathcal{M}_W$ (one to release each counter) which are independently $\epsilon$-differentially private. The parallel composition theorem~\cite{dwork2014algorithmic} states that if each mechanism is computed on a disjoint subset of the dataset, then we can compute all of the mechanisms (i.e. release all of the counters) with $\epsilon$-differential privacy. 

This is indeed the case for the RACE sketch. Under the LSH function, each element in the dataset maps to exactly one of the $W$ counters. Thus, the $W$ counters are computed on disjoint subsets of the dataset and we can release them with $\epsilon$-differential privacy. 
\end{proof}
\begin{theorem}
For any $R > 0$, $W > 0$, and LSH family $\mathcal{F}$, the output of Algorithm 1, or the RACE sketch $\mathcal{S}_{\mathcal{D}}$, is $\epsilon$-differentially private.
\end{theorem}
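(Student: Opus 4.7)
The plan is to combine Lemma~\ref{lem:row} with the sequential composition theorem, exactly along the lines indicated in the proof sketch preceding the theorem. By Lemma~\ref{lem:row}, if we add independent noise $\mathrm{Lap}(1/\epsilon')$ to each of the $W$ counters in a single row, that row is $\epsilon'$-differentially private. Algorithm~\ref{alg:sketch} adds $\mathrm{Lap}(R/\epsilon)$ noise, which corresponds to applying Lemma~\ref{lem:row} with $\epsilon' = \epsilon/R$. Hence each of the $R$ rows is, on its own, $(\epsilon/R)$-differentially private.

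Next, I would view the sketch as the composition of $R$ mechanisms $\mathcal{M}_1,\ldots,\mathcal{M}_R$, where $\mathcal{M}_r$ produces row $r$. The LSH functions $l_1,\ldots,l_R$ are drawn independently from $\mathcal{F}$, and the Laplace noise is independent across rows, so these mechanisms are independent conditional on $\mathcal{D}$. Crucially, the rows all operate on the \emph{same} dataset $\mathcal{D}$, so parallel composition cannot be used across rows; however, sequential composition~\cite{dwork2014algorithmic} applies and yields that the joint release of all $R$ rows is $R \cdot (\epsilon/R) = \epsilon$-differentially private.

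Finally, the floor operation $\lfloor \mathcal{S}_{\mathcal{D}} + Z\rfloor$ in the last line of Algorithm~\ref{alg:sketch} is a deterministic post-processing step that depends on the private output only through the already-privatized counters. Since differential privacy is closed under post-processing, flooring does not degrade the $\epsilon$ guarantee, and the final output $\mathcal{S}_{\mathcal{D}}$ is $\epsilon$-differentially private as claimed.

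I do not expect any step to be a genuine obstacle, since the argument is essentially a bookkeeping exercise once Lemma~\ref{lem:row} is in place. The subtlest point worth stating carefully is the distinction between parallel and sequential composition: within a row, the $W$ counters see disjoint pieces of $\mathcal{D}$ (by the definition of a hash function), which is what licenses parallel composition inside Lemma~\ref{lem:row}; across rows, the same record contributes to every row, so only sequential composition is available, and it is this step that forces the $R/\epsilon$ scale for the per-counter Laplace noise in Algorithm~\ref{alg:sketch}.
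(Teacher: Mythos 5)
Your proposal is correct and follows essentially the same route as the paper: apply Lemma~\ref{lem:row} with privacy parameter $\epsilon/R$ per row (hence $\mathrm{Lap}(R/\epsilon)$ noise per counter) and then invoke sequential composition across the $R$ rows to obtain the overall $\epsilon$ guarantee. Your additional remark that the flooring step is harmless post-processing is a valid refinement the paper leaves implicit, but it does not change the argument.
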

\begin{proof}
The sketch is composed of $R$ independent rows. Consider the mechanism from Lemma~\ref{lem:row} to release a single row with $\epsilon$-differential privacy. The sequential composition theorem~\cite{dwork2014algorithmic} states that given $R$ mechanisms $\mathcal{M}_1, ... \mathcal{M}_R$ (one to release each row) which are independently $\epsilon$-differentially private, we can compute all the mechanisms (i.e. release all rows) with $R\epsilon$-differential privacy. 

To construct the sketch, we apply Lemma~\ref{lem:row} to each row with $\epsilon/R$ differential privacy by adding independent Laplace noise $\mathrm{Lap}(R / \epsilon)$ to each counter. The sequential composition theorem guarantees that we can release all $R$ rows with $\epsilon$-differential privacy.
\end{proof}

\subsection*{Utility}

First, we introduce the median-of-means trick in Lemma~\ref{lem:med}. The median-of-means procedure is a standard analysis technique with many applications in the streaming literature. Lemma~\ref{lem:med} is a special case of Theorem 2.1 from~\cite{alon1999space}.
\begin{lemma}
\label{lem:med}
Let $X_1,...X_R$ be $R$ i.i.d. random variables with mean $\mathbb{E}[X] = \mu$ and variance $\leq \sigma^2$. To get the median of means estimate $\hat{\mu}$, break the $R$ random variables into $k$ groups with $m = R/k$ elements in each group.
$$ \hat{\mu} = \mathrm{median}\left(\sum_{i = 1}^m X_i, ..., \sum_{i = (k - 1)m + 1}^{km} X_i\right)$$

Put $k = 8 \log(1/ \delta)$ and $m = R / k$\footnote{For simplicity, we suppose $k$ and $m$ are integers that evenly divide $R$. See~\cite{alon1999space} for the complete analysis.}. Then with probability at least $1 - \delta$, the deviation of the estimate $\hat{\mu}$ from the mean $\mu$ is
$$ |\hat{\mu} - \mu|\leq \sqrt{32\sigma^2 \frac{\log(1/\delta)}{R}}$$
\end{lemma}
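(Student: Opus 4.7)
The plan is to use the standard median-of-means argument, combining Chebyshev's inequality with a Chernoff-type bound on the number of ``bad'' groups. First I would form the $k$ group averages $Y_j = \frac{1}{m}\sum_{i \in G_j} X_i$ (treating the displayed expression as a median of group means rather than a median of group sums, since a median of raw sums would concentrate around $m\mu$ rather than $\mu$; this appears to be a minor typo in the statement, consistent with the way Algorithm~\ref{alg:query} averages RACE rows). By independence and linearity, each $Y_j$ is an unbiased estimator of $\mu$ with variance at most $\sigma^2/m$.

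Next I would apply Chebyshev's inequality to each $Y_j$ with a threshold $t = 2\sqrt{\sigma^2/m}$, which gives $\Pr[|Y_j - \mu| > t] \leq \sigma^2/(m t^2) = 1/4$. Call a group ``bad'' whenever $|Y_j - \mu| > t$, and let $B = \sum_{j=1}^k \mathbf{1}[|Y_j - \mu| > t]$ count the bad groups. Since the $k$ groups are disjoint subsets of i.i.d.\ samples, $B$ is a sum of independent Bernoulli random variables with $\mathbb{E}[B] \leq k/4$. The key observation is that the median of $Y_1,\dots,Y_k$ falls within $t$ of $\mu$ whenever strictly fewer than half of the groups are bad, so it suffices to control $\Pr[B \geq k/2]$.

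Now I would apply a multiplicative Chernoff bound to $B$. Since $k/2 \geq 2\,\mathbb{E}[B]$, a standard form of the bound yields $\Pr[B \geq k/2] \leq e^{-k/8}$. Choosing $k = 8\log(1/\delta)$ makes this at most $\delta$, so with probability $\geq 1-\delta$ we have $|\hat{\mu} - \mu| \leq t$. Substituting $m = R/k = R/(8\log(1/\delta))$ into $t = 2\sqrt{\sigma^2/m}$ yields $t = \sqrt{32\,\sigma^2 \log(1/\delta)/R}$, which is exactly the claimed bound.

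The main obstacle, if one can call it that, is pure bookkeeping: chasing the constants through Chebyshev and Chernoff so that they line up as the factor $32$ inside the square root and so that $k = 8\log(1/\delta)$ suffices. There is no conceptual difficulty --- the argument is a textbook ``confidence boosting'' reduction, and since the statement already attributes the result to Theorem 2.1 of Alon--Matias--Szegedy, the only role of this proof is to specialize that theorem to the variance bound in use. The non-integrality of $k$ and $m$ is explicitly waved away in the footnote, so I would not dwell on it.
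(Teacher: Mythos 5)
Your argument is correct, but note that the paper does not actually prove Lemma~\ref{lem:med} at all: it states the lemma as ``a special case of Theorem 2.1 from~\cite{alon1999space}'' and moves on, so there is no internal proof to compare against. What you have written is the standard self-contained derivation that the citation is standing in for, and your constants check out: Chebyshev at threshold $t = 2\sqrt{\sigma^2/m}$ gives failure probability $1/4$ per group, the median is within $t$ of $\mu$ unless at least half the groups are bad, and $t = 2\sqrt{\sigma^2 k/R} = \sqrt{32\sigma^2\log(1/\delta)/R}$ once $k = 8\log(1/\delta)$. You are also right that the displayed estimator should be a median of group \emph{means} (or the sums should be divided by $m$); as written the group sums concentrate around $m\mu$, and the paper's application in Theorem~\ref{thm:tradeoff} only makes sense with the averaged version, so this is indeed a typo you are correct to repair rather than inherit.

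One small imprecision: the step $\Pr[B \geq k/2] \leq e^{-k/8}$ does not follow from the \emph{multiplicative} Chernoff bound as you state. With $\mathbb{E}[B] \leq k/4$, the multiplicative form $\Pr[B \geq 2\mathbb{E}[B]] \leq e^{-\mathbb{E}[B]/3}$ only yields $e^{-k/12}$, which would force $k = 12\log(1/\delta)$ and a worse constant than $32$. The bound you want is the additive Chernoff--Hoeffding inequality, $\Pr[B \geq \mathbb{E}[B] + \lambda] \leq e^{-2\lambda^2/k}$ with $\lambda = k/4$, which gives exactly $e^{-k/8}$ and makes $k = 8\log(1/\delta)$ suffice. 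This is a one-line fix and does not affect the structure of your argument.
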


Each row $r$ of the sketch provides an estimator $X_r$ to be used in the median-of-means technique. By adding the Laplace noise variance, we can find $\sigma^2$ in Lemma~\ref{lem:med} and provide performance guarantees. 

\begin{theorem}
Let $\hat{f}_{\mathcal{D}}(\mathbf{q})$ be the median-of-means estimate using an $\epsilon$-differentially private RACE sketch with $R$ rows and $\tilde{f}_{\mathcal{D}}(\mathbf{q}) = \sum_{\mathcal{D}} \sqrt{k(\mathbf{x},\mathbf{q})}$. Then with probability $1 - \delta$, 
$$|\hat{f}_{\mathcal{D}}(\mathbf{q}) - f_{\mathcal{D}}(\mathbf{q})| \leq \left(\frac{\tilde{f}^2_\mathcal{D}(\mathbf{q})}{R} + \frac{2}{\epsilon^2}R \right)^{1/2}\sqrt{32 \log 1/\delta}$$

\end{theorem}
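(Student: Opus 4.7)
The plan is to treat each row of the private sketch as an independent estimator and then plug into the median-of-means inequality from Lemma~\ref{lem:med}. Concretely, fix a query $\mathbf{q}$ and for each $r\in\{1,\dots,R\}$ let $\tilde{X}_r = \mathcal{S}_{\mathcal{D}}[r, l_r(\mathbf{q})]$ denote the (noisy) count returned by row $r$, i.e. $\tilde{X}_r = X_r + Z_r$ where $X_r$ is the un-noised row estimator and $Z_r \sim \mathrm{Lap}(R/\epsilon)$ is the privacy noise injected in Algorithm~\ref{alg:sketch}. Since the $R$ rows use independent LSH functions and independent Laplace draws, the $\tilde{X}_r$ are i.i.d.

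Next I would compute the first two moments. By Theorem~\ref{thm:race} and the fact that $\mathbb{E}[Z_r]=0$, linearity gives $\mathbb{E}[\tilde{X}_r] = f_{\mathcal{D}}(\mathbf{q})$. For the variance, independence of $X_r$ and $Z_r$ yields
$$\mathrm{var}(\tilde{X}_r) \;=\; \mathrm{var}(X_r) + \mathrm{var}(Z_r) \;\leq\; \tilde{f}^2_{\mathcal{D}}(\mathbf{q}) + 2\Bigl(\frac{R}{\epsilon}\Bigr)^{2},$$
using the variance bound from Theorem~\ref{thm:race} for the first term and the fact that $\mathrm{Lap}(b)$ has variance $2b^{2}$ for the second. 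Call this upper bound $\sigma^{2}$.

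Finally, apply Lemma~\ref{lem:med} to the i.i.d.\ sequence $\tilde{X}_1,\dots,\tilde{X}_R$ with this $\sigma^{2}$. The lemma yields, with probability at least $1-\delta$,
$$|\hat{f}_{\mathcal{D}}(\mathbf{q}) - f_{\mathcal{D}}(\mathbf{q})| \;\leq\; \sqrt{32\sigma^{2}\,\frac{\log(1/\delta)}{R}} \;=\; \left(\frac{\tilde{f}^{2}_{\mathcal{D}}(\mathbf{q})}{R} + \frac{2R}{\epsilon^{2}}\right)^{1/2}\sqrt{32\log(1/\delta)},$$
which is exactly the claimed bound. The estimator $\hat{f}_{\mathcal{D}}$ is taken to be the median-of-means combination of the $R$ rows (rather than the plain mean used in Algorithm~\ref{alg:query}), matching the hypothesis of the theorem.

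The only non-mechanical subtlety is justifying that the row estimators remain i.i.d.\ and unbiased after the privacy perturbation, plus handling the floor operation $\lfloor \cdot \rfloor$ in Algorithm~\ref{alg:sketch}; the floor introduces at most a constant additive bias of magnitude $\tfrac{1}{2}$ per row that is absorbed into the dominant $\sqrt{R/\epsilon^{2}}$ term and can be ignored at this level of precision. Everything else is a direct substitution into Lemma~\ref{lem:med}, so the main obstacle is really just carefully accounting for the added Laplace variance rather than any genuinely delicate probabilistic argument.
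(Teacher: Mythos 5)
Your proposal is correct and follows essentially the same route as the paper: add the Laplace noise variance $2R^2/\epsilon^2$ to the per-row variance bound from Theorem~\ref{thm:race} and plug the resulting $\sigma^2$ into the median-of-means inequality of Lemma~\ref{lem:med}. Your aside about the floor operation is a detail the paper's proof silently ignores, but it does not change the argument.
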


\begin{proof}

To use median-of-means, we must bound the variance of the RACE estimate $X_r = \mathcal{S}_{\mathcal{D}}[r,l_r(q)]$ after we add the Laplace noise. Since the Laplace noise is independent of the LSH functions, we simply add the Laplace noise variance $2R^{2}\epsilon^{-2}$ to the variance bound in Theorem 2. 
$$ \mathrm{var}(X_r) \leq \sigma^2 = \left(\tilde{f}_{\mathcal{D}}\right)^2 + 2\frac{R^2}{\epsilon^2}$$
Using this variance bound with Lemma~\ref{lem:med} we have the following statement with probability $1 - \delta$.
$$|\hat{f}_{\mathcal{D}}(\mathbf{q}) - f_{\mathcal{D}}(\mathbf{q})| \leq \left(\frac{\tilde{f}^2_\mathcal{D}(\mathbf{q})}{R} + \frac{2}{\epsilon^2}R \right)^{1/2}\sqrt{32 \log 1/\delta}$$

\end{proof}

\begin{corollary}
Put $R = \lceil\frac{1}{\sqrt{2}}\tilde{f}_\mathcal{D}(\mathbf{q})\epsilon\rceil$. Then the approximation error bound is
$$|\hat{f}_{\mathcal{D}}(\mathbf{q}) - f_{\mathcal{D}}(\mathbf{q})| \leq 16\left(\frac{\tilde{f}_\mathcal{D}(\mathbf{q})}{\epsilon}\log 1/\delta\right)^{1/2} \leq 16\sqrt{\frac{N}{\epsilon}\log 1/\delta}$$
\end{corollary}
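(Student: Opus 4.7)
The plan is to treat Corollary 1 as a straightforward optimization over $R$ applied to Theorem 3, followed by a loose bound to absorb the ceiling and to pass from $\tilde f_{\mathcal D}(\mathbf q)$ to $N$. Concretely, I would start from the inequality
$$|\hat f_{\mathcal D}(\mathbf q) - f_{\mathcal D}(\mathbf q)|\leq \sqrt{32\log 1/\delta}\cdot\left(\frac{\tilde f_{\mathcal D}^2(\mathbf q)}{R} + \frac{2R}{\epsilon^2}\right)^{1/2},$$
and minimize the quantity inside the square root over $R>0$. Writing $g(R)=a/R + bR$ with $a=\tilde f_{\mathcal D}^2(\mathbf q)$ and $b=2/\epsilon^2$, the first-order condition $g'(R)=-a/R^2+b=0$ yields $R^\star=\sqrt{a/b}=\tilde f_{\mathcal D}(\mathbf q)\,\epsilon/\sqrt 2$, which is exactly the (unceiled) value specified in the statement. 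Plugging back in gives $g(R^\star)=2\sqrt{ab}=2\sqrt 2\,\tilde f_{\mathcal D}(\mathbf q)/\epsilon$.

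Next I would combine this minimum with the $\sqrt{32\log 1/\delta}$ prefactor. The resulting bound is
$$\sqrt{32\log(1/\delta)}\cdot\sqrt{2\sqrt 2\,\tilde f_{\mathcal D}(\mathbf q)/\epsilon}=\sqrt{64\sqrt 2}\cdot\sqrt{\tilde f_{\mathcal D}(\mathbf q)\epsilon^{-1}\log(1/\delta)},$$
and since $\sqrt{64\sqrt 2}=8\cdot 2^{1/4}<16$, the displayed constant $16$ absorbs this factor with room to spare. This slack is what lets me handle the ceiling: I would set $R=\lceil R^\star\rceil$, note that rounding up can only increase the $bR$ term and decrease the $a/R$ term (with the loss in the former bounded by $b$, which is absorbed by the same constant slack), and conclude that the optimized bound still holds with leading constant $16$.

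Finally, the second inequality is immediate from the definition of $\tilde f_{\mathcal D}$: since $k(\mathbf x,\mathbf q)$ is an LSH collision probability, it lies in $[0,1]$, so $\sqrt{k(\mathbf x,\mathbf q)}\leq 1$ and hence $\tilde f_{\mathcal D}(\mathbf q)=\sum_{\mathbf x\in\mathcal D}\sqrt{k(\mathbf x,\mathbf q)}\leq N$. Substituting into the first inequality gives the claimed uniform-in-$\mathbf q$ bound $16\sqrt{N\epsilon^{-1}\log(1/\delta)}$.

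The only genuine obstacle I anticipate is the ceiling: one needs to confirm that the rounding does not break the bound. The cleanest way I see is the slack-constant argument above, i.e., using the fact that $8\cdot 2^{1/4}$ is comfortably less than $16$, so any $R$ within a constant factor of $R^\star$ still satisfies the stated inequality. The rest of the argument is a textbook AM--GM (or calculus) minimization and a trivial bound on $\tilde f_{\mathcal D}(\mathbf q)$.
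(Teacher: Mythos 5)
Your proposal is correct and follows essentially the same route as the paper: minimize $a/R + bR$ over $R$ with $a=\tilde f_{\mathcal D}^2(\mathbf q)$, $b=2/\epsilon^2$, substitute $R^\star=\tilde f_{\mathcal D}(\mathbf q)\epsilon/\sqrt 2$ back into Theorem~\ref{thm:tradeoff}, and use $\tilde f_{\mathcal D}(\mathbf q)\le N$ since $k\in[0,1]$. You are in fact slightly more careful than the paper, which silently ignores the ceiling and the constant slack ($8\cdot 2^{1/4}<16$) that you make explicit; the only caveat is that your claim that the rounding loss $b$ is absorbed by the slack quietly assumes $\tilde f_{\mathcal D}(\mathbf q)\,\epsilon$ is not too small, a regime the paper's own proof does not address either.
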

\begin{proof}
Take the derivative of $\sqrt{a / R + bR}$ with respect to $R$ to find that $R = \sqrt{a/b}$ minimizes the bound. Put $a = \tilde{f}^2_{\mathcal{D}}(\mathbf{q})$ and $b = 2/\epsilon^2$. The corollary is obtained by substituting $R = \frac{1}{\sqrt{2}}\tilde{f}_{\mathcal{D}}(\mathbf{q})\epsilon$ into Theorem~\ref{thm:tradeoff}. We may replace $\tilde{f}_{\mathcal{D}}(\mathbf{q})$ with $N$ in the inequality because 
$$ \tilde{f}_{\mathcal{D}}(\mathbf{q}) = \sum_{\mathbf{x} \in \mathcal{D}} \sqrt{k(\mathbf{x},\mathbf{q}} \leq \sum_{\mathbf{x}\in\mathcal{D}}1 = N$$

\end{proof}

\bibliographystyle{plain}
\bibliography{main}

\end{document}